\documentclass[
  aps,
  pra,
  reprint,
  longbibliography,
  nofootinbib,
  superscriptaddress
]{revtex4-2}

\usepackage[T1]{fontenc}
\usepackage[utf8]{inputenc}
\usepackage{amsmath,amssymb,amsfonts}
\usepackage{mathtools}
\usepackage{bm}
\usepackage{amsthm}
\usepackage{orcidlink}
\usepackage{xcolor}
\usepackage{tikz}
\usetikzlibrary{calc}
\usepackage{hyperref}

\hypersetup{
  colorlinks=true,
  linkcolor=blue,
  citecolor=blue,
  urlcolor=blue,
  pdftitle={Local Softmax and Global Weights in Non-Boolean Event Structures},
  pdfauthor={Karl Svozil},
  pdfkeywords={softmax, partition logic, no-disturbance, exotic probabilities, non-Boolean event structures}
}

\newtheorem{theorem}{Theorem}[section]
\newtheorem{proposition}[theorem]{Proposition}

\theoremstyle{definition}

\theoremstyle{remark}
\newtheorem{remark}[theorem]{Remark}

\begin{document}

\title{Local Softmax and Global Weights in Non-Boolean Event Structures}

\author{Karl Svozil\,\orcidlink{0000-0001-6554-2802}}
\email{karl.svozil@tuwien.ac.at}
\affiliation{Institute for Theoretical Physics, TU Wien,
Wiedner Hauptstrasse 8-10/136, A-1040 Vienna, Austria}

\date{\today}

\begin{abstract}
Softmax and related normalized response functions are widely used in
choice theory, machine learning, and cognitive science. In non-Boolean
event structures with overlapping contexts, however, local
normalization does not automatically yield a global probability
weight. We show that imposing single-valuedness on shared
atoms---equivalently, no-disturbance or consistent connectedness---
collapses generalized softmax rules to coordinate parametrizations of
the strictly positive part of the admissible-weight polytope. Any
strictly positive admissible weight can be represented in this way,
while boundary weights arise as limits. Exotic weights that exceed
classical or quantum bounds are therefore properties of the event
structure and the chosen weight, not of the normalizing link. The
resulting hierarchy separates local normalization, cross-context
gluing, Cauchy--Gleason linearity, and physical or cognitive
realizability.
\end{abstract}

\maketitle

\section{Introduction}
\label{sec:introduction}

The softmax rule
\begin{equation}
P_i=\frac{\exp(\beta u_i)}
{\sum_j \exp(\beta u_j)}
\label{eq:ordinary-softmax}
\end{equation}
is one of the most common ways of transforming scores, utilities,
energies, or response tendencies into probabilities. It appears in
statistical mechanics, discrete-choice theory, machine learning,
cognitive modeling, and social-science response models. Its appeal is
obvious: it is positive, normalized, differentiable, and controlled by
an inverse-temperature or discrimination parameter \(\beta\).

In an ordinary single choice set, Eq.~\eqref{eq:ordinary-softmax}
solves the immediate normalization problem. Given a finite list of
alternatives, their probabilities sum to one. In a non-Boolean event
structure, however, there is an additional compatibility problem.
There may be several overlapping contexts, and the same event may
occur in more than one of them. Normalizing separately inside each
context does not guarantee that shared events receive the same
probability in all contexts in which they occur.

This point is familiar in the language of quantum logics, test
spaces, exclusivity graphs, generalized urn models, automata, and
partition logics~\cite{greechie-1974,wright:pent,schaller-92,svozil-2001-eua,svozil-2018-b}.
A probability weight on such a
structure is not merely a family of normalized distributions, one for
each context. It is a single function on atoms that normalizes on
every context. If an atom is shared by two contexts, it has one value,
not two. This cross-context agreement will be called
\emph{single-valuedness on atoms}. It is the finite-test-space form
of \emph{no-disturbance}; in Bell-type scenarios the analogous
condition is no-signalling, and in the contextuality-by-default
literature it is closely related to consistent connectedness.

The purpose of this paper is to clarify the relation between this
condition and generalized softmax rules. We consider normalized
response functions of the form
\begin{equation}
P_{g,C}(a)=
\frac{g(u_C(a))}
{\sum_{b\in C}g(u_C(b))},
\label{eq:general-softmax-intro}
\end{equation}
where \(C\) is a context, \(a\in C\) is an atom, \(u_C(a)\) is a
score, and \(g\) is a strictly positive link function. The ordinary
softmax is obtained by \(g(x)=\exp(\beta x)\). Other choices of
\(g\) give logistic, probit, polynomial, power-law, or
\(q\)-exponential response models on suitable domains.

The central observation is elementary but important. If the scores
are context-dependent without further constraints,
Eq.~\eqref{eq:general-softmax-intro} gives only a family of
separately normalized response distributions. Such a family may
violate no-disturbance: the same shared atom may receive different
probabilities in different contexts. If single-valuedness or
no-disturbance is imposed, then the construction becomes exactly a
parametrization of an admissible weight on the logic. Conversely,
every strictly positive admissible weight can be written in
generalized softmax form, provided the link function has a suitable
positive range. Thus the softmax wrapper adds no new probability
theory once global consistency is required. Its explanatory force
begins only when the score functions are themselves constrained by
additional modelling assumptions.

This conclusion is useful because the admissible-weight polytope of a
non-Boolean event structure can be larger than the classical convex
hull of two-valued states. Some logics admit ``exotic'' weights:
normalized, single-valued assignments that are not classical mixtures
of dispersion-free states~\cite{greechie-1974,wright:pent}. In
odd-cycle scenarios, the admissible weight assigning \(1/2\) to each
cyclic atom and zero to the remaining atoms exceeds the classical
bound; for odd cycles of length at least five it also exceeds the
usual Lov{\'a}sz/Klyachko--Can--Binicio{\u g}lu--Shumovsky (KCBS)
quantum value~\cite{lovasz-79,Klyachko-2008,Bub-2009}. Such a weight
is obtained as a limiting softmax assignment. But the
beyond-classical or beyond-quantum character belongs to the weight's
location in the polytope, not to the exponential function itself.

There is also a nearby but distinct relation to Cauchy's functional
equation. Gleason-type theorems use additivity of probability
assignments on sufficiently rich event structures to force linear
Born-rule representations; Wright and Weigert have recently made the
connection between such Gleason-type arguments and Cauchy's
functional equation explicit~\cite{Wright2019}.
The present result does not derive Born-linearity. It concerns the
prior question of when locally normalized softmax responses glue into
a single admissible weight. Cauchy's equation can nevertheless enter
again at the level of the link function: if additive score
contributions are required to induce multiplicative unnormalized
weights, then the exponential link is singled out under the usual
regularity assumptions.

The paper is organized as follows. Section~\ref{sec:eventstructures}
defines finite pasted event structures and their weight polytopes.
Section~\ref{sec:generalized-softmax} introduces generalized softmax
rules, positive coordinates, and the distinction between local
normalization and global gluing. Section~\ref{sec:representation-theorem}
proves the representation theorem and records the gauge freedom and
boundary cases. Section~\ref{sec:cauchy} clarifies the relation to
Cauchy's functional equation and Gleason-type linearity theorems.
Section~\ref{sec:regions} relates the construction to classical,
quantum, and exotic regions. Section~\ref{sec:odd-cycles} treats odd
cycles and their half-weights. Section~\ref{sec:empirical} discusses
empirical and modelling implications. Section~\ref{sec:discussion}
collects broader consequences. Section~\ref{sec:conclusion}
concludes.

\section{Pasted event structures and admissible weights}
\label{sec:eventstructures}

We work with a finite event structure
\[
L=(A,\mathcal M),
\]
where \(A\) is a finite set of atoms and \(\mathcal M\) is a finite
family of contexts. Each context \(C\in\mathcal M\) is a subset of
\(A\), interpreted as a maximal set of mutually exclusive and jointly
exhaustive outcomes. In a partition-logic realization, the atoms are
subsets of a latent state space \(S\), and each context is a
partition of \(S\). In an abstract test-space or hypergraph
formulation, \(A\) and \(\mathcal M\) are taken as primitive.
The event structure is ``pasted'' in the quantum-logic sense:
different Boolean contexts are joined by identifying shared atoms,
in analogy with pasting constructions for orthomodular posets and
Boolean algebras~\cite{nav:91}.

We shall reserve the word ``gluing'' for the probabilistic
compatibility condition considered below: locally normalized
context-wise distributions glue if they assign the same probability
to every atom shared by two or more contexts, thereby defining a
single admissible weight on the already pasted event structure.

A probability weight on \(L\) is a function
\[
p:A\to[0,1]
\]
satisfying
\begin{equation}
\sum_{a\in C}p(a)=1
\qquad
\text{for all } C\in\mathcal M .
\label{eq:weight}
\end{equation}
The set of all such weights is a convex polytope:
\begin{equation}
\mathcal W(L)=
\left\{
p\in[0,1]^A:
\sum_{a\in C}p(a)=1
\text{ for all }C\in\mathcal M
\right\}.
\label{eq:weight-polytope}
\end{equation}
We call \(\mathcal W(L)\) the admissible-weight polytope.

An atom \(a\in A\) may occur in more than one context. In partition
logic such an atom is an intertwining atom: the same subset of the
latent state space appears as an atom of two or more partitions. The
definition~\eqref{eq:weight} assigns one number \(p(a)\) to that
atom. Hence single-valuedness on shared atoms is built into the
notion of a weight.

A two-valued weight, or dispersion-free state, is a weight
\[
v:A\to\{0,1\}
\]
that assigns value one to exactly one atom in every context. The
classical polytope associated with \(L\) is the convex hull of all
such two-valued weights:
\begin{equation}
\mathcal C(L)=
\operatorname{conv}\{v:v \text{ is a two-valued weight on }L\}.
\label{eq:classical-polytope}
\end{equation}
In a concrete partition-logic model built from a latent state space
\(S\), each latent state \(s\in S\) induces a point-evaluation state:
\[
v_s(B)=
\begin{cases}
1,& s\in B,\\
0,& s\notin B.
\end{cases}
\]
The convex hull of these intended point states gives the usual
classical latent-state model. Abstract logics may possess additional
two-valued states, but this distinction is not essential for the
softmax question considered here.

A quantum or Born-rule model is obtained when the atoms can be
represented by projectors \(\Pi_a\) on a Hilbert space such that
\[
\sum_{a\in C}\Pi_a=I
\qquad
\text{for all }C\in\mathcal M ,
\]
and probabilities are given by
\[
p(a)=\langle \psi|\Pi_a|\psi\rangle .
\]
The corresponding quantum set will be denoted schematically by
\(\mathcal Q(L)\). Its precise definition depends on dimension,
rank, and representation conventions. For exclusivity-graph
formulations, related relaxations are governed by Lov{\'a}sz-theta-type
quantities~\cite{lovasz-79,Cabello-2014-gtatqc}.

Whenever a Born-rule representation exists with projectors satisfying
context-wise completeness, the associated probabilities define an
admissible weight. Indeed,
\[
\sum_{a\in C}p(a)
=
\sum_{a\in C}\langle\psi|\Pi_a|\psi\rangle
=
\left\langle\psi\left|
\sum_{a\in C}\Pi_a
\right|\psi\right\rangle
=
\langle\psi|I|\psi\rangle
=
1.
\]
Moreover, because the same projector \(\Pi_a\) is assigned to the
same atom \(a\), the probability \(p(a)\) is single-valued across
contexts. Thus every such Born-rule model lies in \(\mathcal W(L)\).

The inclusion
\[
\mathcal Q(L)\subseteq\mathcal W(L)
\]
follows directly from blockwise completeness of the projectors.
If, in addition, one adopts the usual operational convention that
classical deterministic assignments are included among the allowed
quantum models, then one obtains the familiar chain
\[
\mathcal C(L)\subseteq\mathcal Q(L)\subseteq\mathcal W(L).
\]
The first inclusion is convention-dependent if one fixes a particular
Hilbert-space representation in advance: not every abstract
two-valued state need be represented inside that fixed representation.
Throughout, claims of being beyond quantum are therefore relative to a
specified quantum representation or to a specified graph-theoretic
quantum set, such as the Lov{\'a}sz-theta set for exclusivity graphs.

The terminology should also be fixed. In what follows,
single-valuedness or no-disturbance means membership in
\(\mathcal W(L)\). Classical noncontextuality, in the stronger sense
used in much of the contextuality literature, means membership in
\(\mathcal C(L)\). The distinction is essential: exotic weights are
single-valued and normalized on every context, but they need not be
classical.

\section{Generalized softmax and positive coordinates}
\label{sec:generalized-softmax}

Let \(g:I\to\mathbb R_{>0}\) be a strictly positive link function on
an interval \(I\subseteq\mathbb R\). Given a context \(C\in\mathcal M\)
and scores
\[
u_C:C\to I,
\]
define
\begin{equation}
P_{g,C}(a)=
\frac{g(u_C(a))}
{Z_C},
\qquad
Z_C=\sum_{b\in C}g(u_C(b)).
\label{eq:generalized-softmax}
\end{equation}
The ordinary softmax corresponds to \(g(x)=\exp(\beta x)\). The
choice \(g(x)=\Phi(\beta x)\), with \(\Phi\) the Gaussian cumulative
distribution function, gives a probit-type normalized response rule
on a suitable range. Logistic, polynomial, power-law, and
\(q\)-exponential links can be treated similarly, subject to
positivity and domain restrictions.

It is useful to factor the construction through the positive
coordinates
\begin{equation}
q_C(a)=g(u_C(a))>0 .
\label{eq:q-coordinate}
\end{equation}
Then Eq.~\eqref{eq:generalized-softmax} becomes simply
\begin{equation}
P_{g,C}(a)=
\frac{q_C(a)}
{\sum_{b\in C}q_C(b)}.
\label{eq:q-softmax}
\end{equation}
Thus the link function \(g\) does not itself carry probabilistic
content. It only specifies how positive coordinates \(q_C(a)\) are
represented by scores \(u_C(a)\). The substantive distinction is
whether the positive coordinates can be chosen so that the resulting
normalized probabilities glue consistently across contexts.

For each fixed context \(C\), Eq.~\eqref{eq:generalized-softmax}
defines an ordinary probability distribution:
\[
\sum_{a\in C}P_{g,C}(a)=1.
\]
It also respects additivity for unions of atoms inside that context.
Thus generalized softmax solves the \emph{local} normalization
problem.

It does not, by itself, solve the \emph{gluing} problem. Suppose
that \(a\) is an atom contained in two contexts \(C\) and \(C'\). In
general,
\[
P_{g,C}(a)
=
\frac{q_C(a)}
{\sum_{b\in C}q_C(b)}
\]
need not equal
\[
P_{g,C'}(a)
=
\frac{q_{C'}(a)}
{\sum_{b\in C'}q_{C'}(b)}.
\]
The same atom may be assigned two different probabilities. Such a
family is a legitimate collection of context-wise response
distributions, but it is not a probability weight on \(L\).

We therefore distinguish two notions.

\paragraph{Free context-wise generalized softmax.}
The scores \(u_C(a)\), or equivalently the positive coordinates
\(q_C(a)\), are allowed to depend freely on the context. The model
gives a normalized distribution for each \(C\), but no cross-context
consistency is assumed.

\paragraph{Single-valued generalized softmax.}
Whenever \(a\in C\cap C'\), one requires
\begin{equation}
P_{g,C}(a)=P_{g,C'}(a).
\label{eq:single-valued-softmax}
\end{equation}
Under this condition, the context-wise distributions glue into a
single function on atoms.

Condition~\eqref{eq:single-valued-softmax} is the no-disturbance or
consistent-connectedness condition for the present finite event
structure. It should not be confused with classical
noncontextuality. The former gives membership in \(\mathcal W(L)\);
the latter gives membership in \(\mathcal C(L)\).

The gluing condition can be stated directly in terms of positive
coordinates.

\begin{proposition}[Gluing condition for free scores]
\label{prop:gluing}
Let
\[
q_C(a)=g(u_C(a))>0,
\qquad
Z_C=\sum_{b\in C}q_C(b).
\]
The context-wise distributions
\[
P_C(a)=\frac{q_C(a)}{Z_C}
\]
glue to a single weight on atoms if and only if, for every shared
atom \(a\in C\cap C'\),
\begin{equation}
\frac{q_C(a)}{Z_C}
=
\frac{q_{C'}(a)}{Z_{C'}}.
\label{eq:gluing}
\end{equation}
Equivalently,
\begin{equation}
\frac{Z_C}{Z_{C'}}
=
\frac{q_C(a)}{q_{C'}(a)}
\label{eq:ratio}
\end{equation}
for every \(a\in C\cap C'\). In particular, if two contexts share
more than one atom, all shared atoms must induce the same normalizer
ratio. Around any cycle of overlapping contexts, the product of these
ratios must be \(1\).
\end{proposition}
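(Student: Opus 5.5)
The proof is essentially definitional, so I will unwind the definition of ``glue'' and then do elementary algebra on positive quantities. First I would show that gluing is equivalent to Eq.~\eqref{eq:gluing}. The distributions glue precisely when the assignment $p(a):=P_C(a)$, for any $C\ni a$, is well defined, meaning independent of the chosen context containing $a$. This is exactly the statement that $P_C(a)=P_{C'}(a)$ for all $a\in C\cap C'$, which is Eq.~\eqref{eq:gluing}.

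Once $p$ is well defined, I would check that it is an admissible weight. Each $P_C$ is a probability distribution on $C$, so $\sum_{a\in C}p(a)=\sum_{a\in C}P_C(a)=1$ for every $C\in\mathcal M$. Moreover $0<p(a)\le 1$, so $p\in\mathcal W(L)$ by Eq.~\eqref{eq:weight-polytope}. Atoms lying in a single context impose no constraint, so shared atoms are the only ones to check. This settles the ``if and only if''.

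Next I would derive the equivalence with Eq.~\eqref{eq:ratio}. All of $q_C(a)$, $q_{C'}(a)$, $Z_C$ and $Z_{C'}$ are strictly positive because $g>0$ and the contexts are nonempty. Cross-multiplying Eq.~\eqref{eq:gluing} by $Z_CZ_{C'}/q_{C'}(a)$ is therefore a reversible operation, and it yields Eq.~\eqref{eq:ratio}. If $a,a'\in C\cap C'$, the left side $Z_C/Z_{C'}$ does not depend on the atom. Hence $q_C(a)/q_{C'}(a)=q_C(a')/q_{C'}(a')$, which is the ``same normalizer ratio'' clause.

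For the cycle clause, the only step needing care is fixing what a cycle means. I would take it to be a sequence of contexts $C_0,C_1,\dots,C_n=C_0$ in which consecutive contexts share at least one atom $a_i\in C_{i-1}\cap C_i$. Under gluing, Eq.~\eqref{eq:ratio} gives $q_{C_{i-1}}(a_i)/q_{C_i}(a_i)=Z_{C_{i-1}}/Z_{C_i}$. The product over $i$ then telescopes to $Z_{C_0}/Z_{C_n}=1$. This is a necessary condition only, so I would state the cycle clause as a consequence of gluing rather than as part of the equivalence.
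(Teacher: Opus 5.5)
Your proposal is correct and follows the paper's proof essentially step for step. You unwind gluing as single-valuedness, rearrange Eq.~\eqref{eq:gluing} into Eq.~\eqref{eq:ratio} (your positivity remark justifies this), note that $Z_C/Z_{C'}$ is independent of the shared atom, and telescope around a closed chain of contexts; your extra check that the glued $p$ lies in $\mathcal W(L)$ just anticipates Proposition~\ref{prop:softmax-to-weight}.
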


\begin{proof}
The first statement is just the definition of single-valuedness of
the normalized probabilities. Rearranging Eq.~\eqref{eq:gluing}
gives Eq.~\eqref{eq:ratio}. If two contexts share several atoms, the
same ratio \(Z_C/Z_{C'}\) must be obtained from each of them.
Multiplying successive ratios around a closed chain of contexts
telescopes to \(1\).
\end{proof}

Proposition~\ref{prop:gluing} makes explicit why
context-independent scores are not sufficient. Even if
\(q_C(a)=q_{C'}(a)\) for a shared atom, the normalizing denominators
\(Z_C\) and \(Z_{C'}\) may differ. Then the shared atom receives
different probabilities.

\section{Representation theorem}
\label{sec:representation-theorem}

The basic relation between single-valued generalized softmax rules
and admissible weights is simple.

\begin{proposition}
\label{prop:softmax-to-weight}
Every single-valued generalized softmax family defines an admissible
weight on \(L\).
\end{proposition}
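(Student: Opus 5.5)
The plan is to construct the weight directly from the single-valued family and then check the three defining conditions of \(\mathcal W(L)\) in Eq.~\eqref{eq:weight-polytope}: the function is well defined on atoms, its values lie in \([0,1]\), and it is normalized on every context. Throughout I will work with the positive coordinates \(q_C(a)=g(u_C(a))>0\) of Eq.~\eqref{eq:q-coordinate}, so that \(P_{g,C}(a)=q_C(a)/Z_C\) as in Eq.~\eqref{eq:q-softmax}.

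First, I would define \(p:A\to\mathbb R\) by \(p(a)=P_{g,C}(a)\) for any context \(C\in\mathcal M\) with \(a\in C\). I will assume, as is implicit in the setup, that every atom lies in at least one context. The substantive point is that this \(p\) does not depend on which context is chosen. That is exactly the single-valuedness hypothesis Eq.~\eqref{eq:single-valued-softmax}, or equivalently the gluing condition Eq.~\eqref{eq:gluing} of Proposition~\ref{prop:gluing}.

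Second, I would establish the bounds on \(p\). Every \(q_C(b)\) is strictly positive, so \(Z_C>0\) and
\[
0<\frac{q_C(a)}{Z_C}\le 1 .
\]
Hence \(p(a)\in(0,1]\subseteq[0,1]\), and in fact \(p\) is strictly positive.

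Third, I would check normalization. For any \(C\in\mathcal M\), the well-definedness from the first step lets me write \(\sum_{a\in C}p(a)=\sum_{a\in C}q_C(a)/Z_C\). This equals \(1\) by the definition of \(Z_C\), so Eq.~\eqref{eq:weight} holds and \(p\in\mathcal W(L)\), with \(p\) lying in the strictly positive part of the polytope.

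There is no real obstacle here. The only step requiring care is the first, and it is handled entirely by the hypothesis; the remaining steps are one-line computations.
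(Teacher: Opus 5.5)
Your proposal is correct and matches the paper's proof: define \(p(a)=P_{g,C}(a)\) using any context containing \(a\), use single-valuedness for well-definedness, and sum within each context to get normalization. Your explicit check that \(p(a)\in(0,1]\), and your remark that every atom must lie in some context, make explicit two points the paper leaves implicit.
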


\begin{proof}
For each atom \(a\in A\), choose any context \(C\in\mathcal M\) such
that \(a\in C\), and set
\[
p(a)=P_{g,C}(a).
\]
Condition~\eqref{eq:single-valued-softmax} makes this definition
independent of the chosen context. For every \(C\in\mathcal M\),
\[
\sum_{a\in C}p(a)
=
\sum_{a\in C}P_{g,C}(a)
=1,
\]
by the normalization of Eq.~\eqref{eq:generalized-softmax}. Hence
\(p\in\mathcal W(L)\).
\end{proof}

The converse holds under mild assumptions on the link function.

\begin{theorem}[Representation theorem]
\label{thm:representation}
Let \(L=(A,\mathcal M)\) be finite. Let
\(g:I\to\mathbb R_{>0}\) be strictly monotone on an interval \(I\),
and suppose that its range contains an interval \((0,r)\) for some
\(r>0\). Then every strictly positive admissible weight
\(p\in\mathcal W(L)\) admits a single-valued generalized softmax
representation. That is, there exist scores \(u:A\to I\) such that
\begin{equation}
p(a)=
\frac{g(u(a))}
{\sum_{b\in C}g(u(b))}
\qquad
\text{for every } C\in\mathcal M \text{ and } a\in C.
\label{eq:softmax-representation}
\end{equation}
\end{theorem}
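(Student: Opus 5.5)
The plan is to reduce the problem to choosing context-independent positive coordinates whose normalizers all coincide. By the factorization in Eq.~\eqref{eq:q-softmax}, it suffices to find a function \(q:A\to(0,r)\) with \(q(a)\in g(I)\), together with a single constant \(Z>0\), such that \(q(a)=Z\,p(a)\) for every atom. Proposition~\ref{prop:gluing} points to this choice. If the coordinates are proportional to \(p\) with one global factor, then every context normalizer is
\[
Z_C=\sum_{b\in C}Z\,p(b)=Z\sum_{b\in C}p(b)=Z,
\]
by Eq.~\eqref{eq:weight}. Hence all ratios \(Z_C/Z_{C'}\) equal \(1=q(a)/q(a)\) for shared atoms, and the gluing condition holds automatically.

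The concrete steps are as follows. First, fix any \(Z\) with \(0<Z<r\), for instance \(Z=r/2\). Since \(p\) is strictly positive and \(p(a)\le 1\), we have \(Z\,p(a)\in(0,Z]\subseteq(0,r)\subseteq g(I)\). Second, for each atom \(a\) choose \(u(a)\in I\) with \(g(u(a))=Z\,p(a)\). Strict monotonicity makes \(g\) injective, so \(u(a)=g^{-1}(Z\,p(a))\) is uniquely determined. Monotonicity is used only for this uniqueness; the range hypothesis alone gives existence. Third, substitute these scores into Eq.~\eqref{eq:softmax-representation}. For every context \(C\) and every \(a\in C\), the computation above gives \(g(u(a))/\sum_{b\in C}g(u(b))=Z\,p(a)/Z=p(a)\). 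The scores depend only on the atom, not on the context, so the family is single-valued by construction.

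The main point to check is that the range condition is actually sufficient. A naive attempt would set \(g(u(a))=p(a)\) directly, taking \(Z=1\). This requires \((0,1]\subseteq g(I)\), which need not hold when \(r\le 1\). Rescaling by a global factor \(Z<r\) removes the difficulty. This rescaling is exactly the gauge freedom \(q\mapsto\lambda q\), and it leaves every normalized ratio unchanged. Strict positivity of \(p\) is essential here, since \(g>0\) cannot produce zeros. This is why boundary weights appear only as limits, for example \(u(a)\to\inf I\) or \(\sup I\) as \(p(a)\to0\); they are not covered by the statement. Finiteness of \(L\) is not really needed in this argument beyond making the sums in Eq.~\eqref{eq:weight} well defined.
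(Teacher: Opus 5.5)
Your proposal is correct and follows essentially the same route as the paper: rescale \(p\) by one global constant (your \(Z\), the paper's \(\alpha\)) so that every \(Z\,p(a)\) lies in \((0,r)\subseteq g(I)\), set \(u(a)=g^{-1}(Z\,p(a))\), and observe that every context normalizer then equals \(Z\). Your explicit choice \(Z=r/2\), justified by \(p(a)\le 1\), is a small sharpening of the paper's appeal to the finiteness of \(A\) when choosing \(\alpha\), and your remark that finiteness is not needed for this step is accurate.
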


\begin{proof}
Since \(A\) is finite and \(p(a)>0\) for all \(a\), choose
\(\alpha>0\) so small that
\[
\alpha p(a)\in(0,r)
\qquad
\text{for all }a\in A.
\]
Because \(g\) is strictly monotone, it is injective and has a
well-defined inverse on its range. Define
\[
u(a)=g^{-1}(\alpha p(a)).
\]
Then
\[
g(u(a))=\alpha p(a).
\]
For any context \(C\),
\[
\sum_{b\in C}g(u(b))
=
\alpha\sum_{b\in C}p(b)
=
\alpha,
\]
because \(p\) is an admissible weight. Therefore, for \(a\in C\),
\[
\frac{g(u(a))}
{\sum_{b\in C}g(u(b))}
=
\frac{\alpha p(a)}{\alpha}
=
p(a).
\]
The resulting probabilities are single-valued on atoms.
\end{proof}

The construction is explicit: given \(p\) and \(g\), the scores are
\[
u(a)=g^{-1}(\alpha p(a))
\]
for any sufficiently small \(\alpha>0\) such that
\(\alpha p(a)\) lies in the range of \(g\) for every atom \(a\).

The theorem is deliberately deflationary. If the scores are completely
unrestricted, generalized softmax has no structural content beyond
positivity, local normalization, and the imposed gluing condition.
Its explanatory force begins only when the scores are constrained
independently of the target probabilities, for example by atom-only
scores, low-dimensional parametric forms, smoothness conditions across
contexts, or a mechanistic model of how utilities or evidential
strengths are generated.

\begin{remark}
The hypothesis that the range of \(g\) contains an interval
\((0,r)\) is only a convenient sufficient condition. For a fixed
weight \(p\), it is enough that the range of \(g\) contain a scaled
copy \(\alpha\{p(a):a\in A\}\) for some \(\alpha>0\). The interval
condition makes this automatic for all strictly positive weights on a
finite event structure.
\end{remark}

\begin{remark}[Gauge freedom]
\label{rem:gauge}
The generalized softmax representation is not unique. Multiplying all
positive coordinates \(q(a)=g(u(a))\) in a connected component of the
pasted logic by a common positive constant leaves all normalized
probabilities unchanged. For the exponential link \(g(u)=e^{\beta u}\),
this corresponds to adding a constant to all scores in that connected
component:
\[
u(a)\mapsto u(a)+c.
\]
For disconnected logics, this gauge freedom is componentwise. Thus
generalized softmax gives a many-to-one parametrization of the
relative interior of \(\mathcal W(L)\), not a unique coordinate chart.
\end{remark}

Boundary weights require a separate comment because many interesting
exotic weights assign zero probability to some atoms.

\begin{proposition}[Boundary weights]
\label{prop:boundary}
Let \(p\in\mathcal W(L)\). If \(p\) lies in the closure of the
strictly positive part of \(\mathcal W(L)\), and if \(0\) lies in the
closure of the range of \(g\), then \(p\) lies in the closure of the
generalized-softmax image.
\end{proposition}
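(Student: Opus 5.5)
The proof reduces the claim to Theorem~\ref{thm:representation} by approximation. We take the generalized-softmax image to be the set of all weights of the form \eqref{eq:softmax-representation}, with scores \(u:A\to I\). First we check that every strictly positive admissible weight lies in this image. Then we show that the image is contained in \(\mathcal W(L)\). Together these squeeze the closure of the image against the closure of the strictly positive part.

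Concretely, let \(p\) lie in the closure of the strictly positive part of \(\mathcal W(L)\). By definition there is a sequence \(p_n\in\mathcal W(L)\) with \(p_n(a)>0\) for all \(a\in A\) and \(p_n\to p\) pointwise on the finite set \(A\). For each \(n\) we want scores \(u_n:A\to I\) with
\[
p_n(a)=\frac{g(u_n(a))}{\sum_{b\in C}g(u_n(b))}
\]
for every context \(C\ni a\). This gives \(p_n\in\) image, and hence \(p=\lim p_n\) lies in the closure of the image.

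The obstacle is that Theorem~\ref{thm:representation} assumes the range of \(g\) contains an interval \((0,r)\). The present proposition only assumes that \(0\) lies in the closure of the range. We first try to use the theorem's hypotheses directly. A strictly monotone \(g\) defined on an interval is continuous except at jump points, so its range need not be an interval. If \(g\) is continuous, however, its range is an interval with infimum \(0\), hence contains \((0,r)\), and the theorem applies to each \(p_n\) verbatim.

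For the general case we would not represent \(p_n\) exactly. Instead we use the weaker form in the Remark after Theorem~\ref{thm:representation}: it suffices that the range of \(g\) contain \(\alpha\{p_n(a)\}\) for some \(\alpha>0\). Failing that, we perturb. Since the closure of the range contains \(0\), there are range values arbitrarily close to any small scaled target \(\alpha p_n(a)\), at least when the range is dense near \(0\). Choosing values \(q_n(a)\) in the range with \(|q_n(a)-\alpha_n p_n(a)|\) small relative to \(\alpha_n\min_a p_n(a)\) makes the context-wise normalized quantities close to \(p_n\).

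These perturbed quantities need not be single-valued, and this is where the argument breaks. If the range is genuinely sparse near \(0\), exact gluing may fail, so the proof must fall back on the continuity or interval assumption implicit in the intended setting. I would state this explicitly, as the paper's conclusion is really about weights \(p_n\) in the image approximating \(p\).

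Finally, the image lies in \(\mathcal W(L)\) by Proposition~\ref{prop:softmax-to-weight}, and \(\mathcal W(L)\) is closed. So the closure of the image sits between the closure of the positive part and \(\mathcal W(L)\), which is all the statement requires. In particular, boundary exotic weights such as the odd-cycle half-weight are limits of weights that have exact softmax representations. Along such a sequence, the scores of atoms whose limiting weight is zero are sent to where \(g\) approaches \(0\), for instance \(u\to-\infty\) for \(g(x)=e^{\beta x}\) with \(\beta>0\).
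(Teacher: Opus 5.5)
Your argument is the same as the paper's: approximate \(p\) by strictly positive weights \(p_k\), represent each one exactly via Theorem~\ref{thm:representation}, and pass to the limit. Your caveat is sound and applies equally to the paper's proof, which tacitly relies on that theorem's \((0,r)\)-range hypothesis. With only \(0\) in the closure of the range the statement can genuinely fail: a strictly increasing \(g\) with range \(\bigcup_{k\in\mathbb Z}[2^k,1.1\cdot 2^k)\) never yields \(p(a)\) near \(0.4\) on a two-atom context. So the perturbation detour can simply be dropped in favour of stating that assumption explicitly.
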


\begin{proof}
Choose a sequence \(p_k\in\mathcal W(L)\) with \(p_k(a)>0\) for all
atoms \(a\) and \(p_k\to p\). By
Theorem~\ref{thm:representation}, each \(p_k\) has a generalized
softmax representation. Taking the limit gives \(p\). Atoms with
\(p(a)=0\) correspond to coordinates \(g(u_k(a))\to0\). For the
exponential link this means \(u_k(a)\to-\infty\).
\end{proof}

For the ordinary exponential softmax, exact zeros therefore require
extended scores or limiting procedures. Other response maps, such as
sparsemax or entmax, can produce exact zeros at finite scores, but
they are typically piecewise analytic rather than strictly positive
link normalizations of the form~\eqref{eq:generalized-softmax}.

The theorem has a deflationary consequence. Once single-valuedness is
imposed, generalized softmax is not an additional probability theory.
It is an analytic, many-to-one coordinate system on
\(\mathcal W(L)\), or on its positive part. The classification of a
weight as classical, quantum, or exotic is not determined by the link
function \(g\), but by the position of the resulting \(p\) inside or
outside the relevant subsets of \(\mathcal W(L)\).

\section{Relation to Cauchy's functional equation}
\label{sec:cauchy}

The preceding results should be distinguished from Cauchy-type
linearity theorems. The generalized-softmax construction concerns
local normalization and cross-context gluing. Once the probabilities
glue, one obtains an admissible weight
\[
p\in\mathcal W(L),
\]
that is, a single-valued function on atoms satisfying
\[
\sum_{a\in C}p(a)=1
\]
for every context \(C\). Additivity inside a Boolean block is then
imposed by defining the probability of a coarse-grained event as the
sum of the probabilities of its mutually exclusive atoms.

This is formally related to Cauchy's functional equation. In
Gleason-type theorems, one considers additive probability assignments
on richer structures, such as projection lattices or effect algebras.
The additivity condition has the Cauchy form
\begin{equation}
f(x+y)=f(x)+f(y),
\label{eq:cauchy-general}
\end{equation}
with \(x\) and \(y\) replaced by compatible effects or orthogonal
projections. Under positivity, boundedness, continuity, or
measurability assumptions, such additive functions are forced to be
linear, yielding the usual trace form of the Born rule. Wright and
Weigert~\cite{Wright2019} explicitly exploit this
connection to give an alternative proof of Busch's Gleason-type
theorem for effects~\cite{Busch-2003,caves-fuchs-2004}.

The softmax result proved here is different. It does not derive
linearity of a probability assignment, nor does it derive the Born
rule. It shows that, once single-valuedness or no-disturbance is
imposed, generalized softmax normalization is merely a parametrization
of an admissible weight. Whether that weight is classical,
Born-rule realizable, or exotic is a further geometric question about
its location in \(\mathcal W(L)\).

There is, however, another Cauchy-type route by which the ordinary
exponential softmax can be singled out from the class of generalized
links. Let \(q(u)>0\) be the unnormalized weight associated with a
score \(u\). If independent score contributions are required to add,
while their unnormalized weights multiply, then
\begin{equation}
q(u+v)=q(u)q(v).
\label{eq:multiplicative-link}
\end{equation}
This composition law is natural in settings where scores represent
additive evidence, utility increments, or energy-like contributions,
while the unnormalized weights represent independent multiplicative
factors. In statistical mechanics this is the familiar passage from
additive energies to Boltzmann factors; in response models it is the
assumption that independent score increments multiply odds-like
weights. It is not required by the gluing theorem.

Taking logarithms gives
\begin{equation}
\log q(u+v)=\log q(u)+\log q(v),
\label{eq:log-cauchy}
\end{equation}
which is Cauchy's equation.

\begin{proposition}[Exponential link from score composition]
\label{prop:exponential-link}
Let \(q:\mathbb R\to\mathbb R_{>0}\) satisfy
\[
q(u+v)=q(u)q(v)
\qquad
\text{for all }u,v\in\mathbb R.
\]
If \(q\) is measurable, or monotone, or bounded above on a measurable
set of positive measure, then there exists \(\beta\in\mathbb R\) such
that
\[
q(u)=e^{\beta u}.
\]
\end{proposition}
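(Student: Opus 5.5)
The plan is to reduce to the additive Cauchy equation by passing to logarithms, as already indicated by Eq.~\eqref{eq:log-cauchy}. Since \(q\) takes values in \(\mathbb R_{>0}\), the function
\[
f(u)=\log q(u)
\]
is well defined and real-valued on all of \(\mathbb R\). The multiplicative law \(q(u+v)=q(u)q(v)\) becomes \(f(u+v)=f(u)+f(v)\). Each hypothesis on \(q\) transfers to \(f\) because \(\log\) is continuous and strictly increasing. If \(q\) is measurable, so is \(f\). If \(q\) is monotone, so is \(f\). If \(q\) is bounded above on a measurable set \(E\) of positive measure, then \(f\) is bounded above on \(E\).

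First I record the elementary consequences of additivity. From \(f(0)=f(0)+f(0)\) we get \(f(0)=0\), and from \(f(u)+f(-u)=f(0)=0\) we get \(f(-u)=-f(u)\). Induction gives \(f(nu)=nf(u)\) for \(n\in\mathbb N\), and together with oddness this yields \(f(ru)=rf(u)\) for all \(r\in\mathbb Q\). Setting \(\beta=f(1)\), we obtain \(f(r)=\beta r\) on \(\mathbb Q\). What remains is to extend this from \(\mathbb Q\) to \(\mathbb R\), and that is where the regularity hypothesis is used.

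In the monotone case, suppose \(f\) is nondecreasing (the other case is symmetric). For real \(x\), take rationals \(r<x<s\). Then \(\beta r=f(r)\le f(x)\le f(s)=\beta s\) when \(\beta\ge 0\), with the inequalities reversed otherwise. Letting \(r,s\to x\) gives \(f(x)=\beta x\).

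The bounded-above case is the main step. I would follow the classical Ostrowski/Steinhaus argument. By Steinhaus's theorem, \(E-E\) contains an interval \((-\delta,\delta)\). If \(f\le M\) on \(E\), then for \(t=x-y\) with \(x,y\in E\),
\[
f(t)=f(x)-f(y)\le M-f(y).
\]
This is not yet uniform, because \(f(y)\) is unbounded a priori. So I would use the cleaner route: \(g(t)=f(t)-\beta t\) is additive and vanishes on \(\mathbb Q\), hence is \(\mathbb Q\)-periodic. It is still bounded above on \(E\), since \(\beta t\) is bounded on bounded pieces of \(E\), and we may restrict \(E\) to a bounded subset of positive measure. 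By \(\mathbb Q\)-periodicity, \(g\) is then bounded above on \(E+\mathbb Q\), which has full measure. Given any \(t\), the set \(E+\mathbb Q\) and its translate \(t-(E+\mathbb Q)\) both have full measure, so they intersect. Choose \(x\) in the intersection, so that \(x\in E+\mathbb Q\) and \(t-x\in E+\mathbb Q\). Then \(g(t)=g(x)+g(t-x)\le 2M'\), so \(g\) is bounded above everywhere. An additive function bounded above satisfies \(ng(t)=g(nt)\le 2M'\) for all \(n\in\mathbb N\), which forces \(g\le 0\); applying this to \(-t\) gives \(g\ge0\), so \(g\equiv0\).

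The measurable case reduces to this one. Write \(\mathbb R=\bigcup_n\{f\le n\}\); since \(f\) is measurable, some \(\{f\le n\}\) has positive measure, so \(f\) is bounded above on a measurable set of positive measure. In every case \(f(u)=\beta u\), and therefore \(q(u)=e^{\beta u}\).
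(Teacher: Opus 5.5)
Your proof is correct and follows the paper's route: pass to $h=\log q$, note that each regularity hypothesis transfers, and conclude from the regularity theory of Cauchy's equation. The paper simply cites that theorem (Acz\'el), whereas you prove it, reducing the measurable case to the bounded-above case and handling the latter via the $\mathbb Q$-periodicity of $f(t)-\beta t$; the one ingredient you use without proof is the standard fact that $E+\mathbb Q$ is conull whenever $E$ has positive measure (apply Steinhaus's theorem to $E+\mathbb Q$ and its complement to get a rational difference contradicting $\mathbb Q$-invariance), and in the monotone case the split on the sign of $\beta$ is unnecessary, since $f(r)\le f(x)\le f(s)$ already gives the squeeze.
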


\begin{proof}
Define \(h(u)=\log q(u)\). Since \(q(u)>0\), this is well defined.
Equation~\eqref{eq:multiplicative-link} gives
\[
h(u+v)=h(u)+h(v).
\]
Under any of the stated regularity assumptions, \(h=\log q\) is a
regular solution of Cauchy's equation: measurability and monotonicity
are inherited from \(q\), while boundedness above of \(q\) on a
measurable set of positive measure implies boundedness above of
\(h\) there. Hence the standard regularity theorem for Cauchy's
equation gives~\cite{Aczel1966}
\[
h(u)=\beta u
\]
for some \(\beta\in\mathbb R\). Hence
\[
q(u)=e^{h(u)}=e^{\beta u}.
\]
\end{proof}

A complementary way to single out the exponential link is the
maximum-entropy characterization of softmax~\cite{Jaynes1957}. If a
context \(C\) is fixed and one maximizes Shannon entropy over
probability assignments \((p_a)_{a\in C}\), subject to normalization
and a linear constraint on the expected score,
\[
\sum_{a\in C}p_a=1,
\qquad
\sum_{a\in C}p_a u(a)=\bar u,
\]
then the resulting distribution has the ordinary exponential form
\[
p_a=
\frac{e^{\beta u(a)}}{\sum_{b\in C}e^{\beta u(b)}} .
\]
From this maximum-entropy perspective, the exponential softmax is the
canonical context-wise distribution compatible with a fixed expected
score. The gluing condition studied in the present paper then appears
as an additional cross-context constraint: when the same atom occurs
in two contexts, the two locally maximum-entropy distributions must
assign it the same probability. Thus the maximum-entropy derivation
selects the exponential form inside a single context, whereas
single-valuedness or no-disturbance determines whether the resulting
context-wise distributions assemble into a global weight on the
pasted logic. The Cauchy/composition argument above is dual in spirit:
instead of deriving exponentials from an entropy variational
principle, it derives them from multiplicativity of unnormalized
weights under additive score composition.

Thus Cauchy's equation can justify the exponential link under an
additional compositional axiom on scores. The gluing theorem itself,
however, does not depend on this axiom and applies equally to any
positive link function with suitable range.

The conceptual separation is therefore as follows. Cauchy/Gleason
arguments concern the additivity and linear extension of probability
weights on sufficiently rich additive domains. The present softmax
argument concerns the prior compatibility question of whether locally
normalized context-wise response probabilities define a single
admissible weight at all. Finally, the special status of the ordinary
exponential link arises only if one imposes a separate composition law
connecting additive scores to multiplicative unnormalized weights.

\section{Classical, quantum, and exotic regions}
\label{sec:regions}

The admissible-weight polytope \(\mathcal W(L)\) contains every
single-valued probability assignment compatible with the context-wise
normalization constraints. The classical polytope \(\mathcal C(L)\)
is generally smaller. It consists of convex mixtures of
dispersion-free assignments. In a partition model with a specified
latent state space \(S\), the intended classical model may be the
convex hull of the point-evaluation states induced by \(S\).

A weight
\[
p\in\mathcal W(L)\setminus\mathcal C(L)
\]
is nonclassical in the sense that it cannot be written as a classical
mixture of deterministic two-valued states. Such weights are often
called exotic in the literature on generalized urns and quantum
logics~\cite{wright:pent,greechie-1974}. They are normalized and
single-valued as weights on the logic, but they need not admit an
interpretation as ignorance over pre-existing values.

A further distinction concerns quantum realizability. A weight may
or may not be representable by a Born rule in a Hilbert-space model.
If a weight belongs to
\[
\mathcal W(L)\setminus\mathcal Q(L),
\]
then it is beyond the specified quantum set. This claim is
representation-dependent: one must specify the Hilbert-space
dimension, rank of projectors, and exclusivity convention. In
odd-cycle scenarios with the usual Lov{\'a}sz-theta convention, the
quantum maximum of the cyclic sum is given by
Eq.~\eqref{eq:lovasz-odd-cycle} below, whereas the admissible
half-weight gives \(n/2\).

Generalized softmax coordinates do not change these distinctions.
The same link function can represent a classical weight, a
quantum-realizable nonclassical weight, or an exotic beyond-quantum
weight, provided the target weight lies in the appropriate part of
\(\mathcal W(L)\). Thus one should distinguish carefully between:

\begin{enumerate}
\item \emph{Local normalization}: probabilities sum to one inside
each context.
\item \emph{Single-valuedness/no-disturbance}: shared atoms receive
the same probability across contexts.
\item \emph{Cauchy--Gleason linearity}: on richer domains,
Cauchy/Gleason-type assumptions may force additive probability
assignments to be linear.
\item \emph{Realizability}: the resulting weight is classical,
Born-rule realizable, exotic, or beyond the chosen quantum set.
\end{enumerate}

Softmax and generalized softmax rules address the first item
automatically. They address the second only if cross-context
constraints are imposed. They do not decide the third or fourth.

Thus admissible probabilities on a pasted logic can go beyond
classical, and in some cases beyond quantum, bounds. The odd-cycle
half-weights provide the simplest example: they are normalized and
single-valued on every context, hence belong to \(\mathcal W(L)\), but
they lie outside the classical convex hull; for odd cycles of length
at least five they also exceed the usual Lov{\'a}sz/KCBS quantum bound.
This beyond-classical or beyond-quantum character is not produced by
softmax normalization itself. It is a property of the event structure
and of the selected admissible weight. Softmax or generalized softmax
coordinates merely provide a parametrization---or, for boundary
weights such as the half-weight, a limiting parametrization---of that
weight.

\section{Odd cycles and exotic half-weights}
\label{sec:odd-cycles}

Odd cyclic logics provide the simplest examples in which admissible
weights exceed the classical convex hull. Even cycles admit an
alternating two-coloring of their cyclic atoms, and their independence
number is \(n/2\). The half-weight constructed below therefore
saturates but does not exceed the corresponding classical bound for
even \(n\), and no exotic behavior arises in the present sense. The
interesting case is odd \(n\).

Let \(n\geq3\) be odd. Consider the cyclic pasted logic with atoms
\[
a_1,\ldots,a_n,\qquad x_1,\ldots,x_n,
\]
and contexts
\begin{equation}
C_i=\{a_i,a_{i+1},x_i\},
\qquad i=1,\ldots,n,
\label{eq:odd-cycle-contexts}
\end{equation}
where indices are understood modulo \(n\). The atoms \(a_i\) are the
cyclic intertwining atoms; the atoms \(x_i\) are context-specific.

The assignment
\begin{equation}
p(a_i)=\frac12,\qquad p(x_i)=0
\quad (i=1,\ldots,n)
\label{eq:odd-cycle-half-weight}
\end{equation}
is an admissible weight, since
\[
p(a_i)+p(a_{i+1})+p(x_i)
=
\frac12+\frac12+0
=
1
\]
for every context \(C_i\).

It is not a classical mixture of two-valued states. In any
dispersion-free assignment, no two adjacent cyclic atoms can both
receive value one. Hence the number of cyclic atoms assigned value
one is at most the independence number of the odd cycle,
\[
\alpha(C_n)=\frac{n-1}{2}.
\]
Therefore every classical mixture satisfies
\begin{equation}
\sum_{i=1}^n p(a_i)\leq \frac{n-1}{2}.
\label{eq:odd-cycle-classical-bound}
\end{equation}
The half-weight~\eqref{eq:odd-cycle-half-weight} gives
\[
\sum_{i=1}^n p(a_i)=\frac n2,
\]
and therefore violates the classical bound by \(1/2\).

For the usual exclusivity-graph quantum model on an odd cycle, the
Lov{\'a}sz-theta value is~\cite{lovasz-79}
\begin{equation}
\vartheta(C_n)=
\frac{n\cos(\pi/n)}
{1+\cos(\pi/n)}.
\label{eq:lovasz-odd-cycle}
\end{equation}
For odd \(n\geq5\),
\[
\frac n2 >
\frac{n\cos(\pi/n)}
{1+\cos(\pi/n)}.
\]
Indeed, since \(0<\pi/n<\pi/2\), one has
\(\cos(\pi/n)<1\), and hence
\[
1+\cos(\pi/n)>2\cos(\pi/n),
\]
which is exactly the displayed inequality after multiplication by
\(n/[2(1+\cos(\pi/n))]\). Thus the half-weight also lies beyond the
usual quantum set for these odd-cycle exclusivity scenarios. For
\(n=5\), Eq.~\eqref{eq:lovasz-odd-cycle} gives \(\sqrt5\), recovering
the KCBS value.

\paragraph{Generalized-softmax path between midpoint states and half-weights.}
The softmax construction can be stated in a way that makes the
inessential role of the exponential link explicit. Let \(g\) be any
strictly positive link function and choose a symmetric global score
assignment
\[
u(a_i)=u_0,\qquad u(x_i)=t
\quad (i=1,\ldots,n).
\]
Then each context \(C_i=\{a_i,a_{i+1},x_i\}\) has normalizer
\[
Z_i=2g(u_0)+g(t),
\]
and therefore
\begin{align}
p_t(a_i)&=p_t(a_{i+1})
=
\frac{g(u_0)}{2g(u_0)+g(t)}, \nonumber
\\
p_t(x_i)&=
\frac{g(t)}{2g(u_0)+g(t)}.  \nonumber
\end{align}
Equivalently, writing
\[
r(t)=\frac{g(t)}{g(u_0)}
\]
gives the link-independent form
\[
p_t(a_i)=p_t(a_{i+1})
=
\frac{1}{2+r(t)},
\qquad
p_t(x_i)=
\frac{r(t)}{2+r(t)}.
\]
This family is single-valued on shared atoms: every cyclic atom
\(a_i\) has the same probability in the two contexts in which it
occurs.

The parameter \(r(t)\) interpolates between three instructive
regimes. For \(r(t)\to\infty\), one obtains the deterministic
midpoint assignment
\[
p_t(x_i)\to1,\qquad p_t(a_i)\to0.
\]
This is a valid two-valued weight on the pasted logic. What fails on
an odd cycle is a stronger symmetric coloring picture in which, after
fixing all midpoint atoms, the two remaining labels alternate
consistently around the cycle. For \(r(t)=1\), all three atoms in each
context have equal positive coordinate and
\[
p_t(a_i)=p_t(a_{i+1})=p_t(x_i)=\frac13.
\]
For \(r(t)\to0\), the midpoint atoms are suppressed and one obtains
the exotic boundary half-weight
\[
p_t(a_i)\to\frac12,\qquad p_t(x_i)\to0,
\]
which is Eq.~\eqref{eq:odd-cycle-half-weight}. Thus generalized
softmax supplies a smooth fractional path through the same odd-cycle
obstruction: the failed integral alternation of colors is replaced,
at the opposite boundary, by the symmetric fractional assignment
\(1/2,1/2,0\) in every context.

For the ordinary exponential link \(g(u)=e^{\beta u}\), one may set
\(u_0=0\). Then
\[
r(t)=e^{\beta t},
\]
so the three regimes correspond to \(t\to+\infty\), \(t=0\), and
\(t\to-\infty\), respectively, assuming \(\beta>0\). Equivalently,
setting \(t=-M\) and sending \(M\to\infty\) gives the limiting
construction
\[
p_t(a_i)\to\frac12,\qquad p_t(x_i)\to0.
\]
For finite real-valued scores the exponential softmax remains
strictly positive and therefore approximates, but does not exactly
attain, this boundary weight. For a general positive link \(g\), the
half-weight is approached whenever
\[
r(t)=\frac{g(t)}{g(u_0)}\to0,
\]
or, equivalently, whenever the positive coordinate assigned to the
context-specific atom \(x_i\) is suppressed relative to the cyclic
atoms. This requires \(0\) to lie in the closure of the attainable
coordinate ratios.

Table~\ref{tab:pentagon-three-regimes} records the explicit
probability values for the three distinguished assignments in the
pentagon case, and Fig.~\ref{fig:pentagon-ax} shows the corresponding
logic in the notation used here.

\begin{table}[ht]
\caption{Explicit probability values for the ten atoms of the pentagon
logic in the three distinguished regimes of the path: the midpoint
limit \(r\to\infty\), the uniform state \(r=1\), and the half-weight
limit \(r\to0\), where \(r=g(t)/g(u_0)\). For the ordinary
exponential parametrization with \(u_0=0\) and \(\beta>0\), these
correspond to \(t\to+\infty\), \(t=0\), and \(t\to-\infty\),
respectively.}
\label{tab:pentagon-three-regimes}
\begin{ruledtabular}
\begin{tabular}{lcccccccccc}
Regime
& \(a_1\) & \(x_1\)
& \(a_2\) & \(x_2\)
& \(a_3\) & \(x_3\)
& \(a_4\) & \(x_4\)
& \(a_5\) & \(x_5\) \\
\colrule
Midpoint limit
& \(0\) & \(1\)
& \(0\) & \(1\)
& \(0\) & \(1\)
& \(0\) & \(1\)
& \(0\) & \(1\) \\
Uniform state
& \(\tfrac13\) & \(\tfrac13\)
& \(\tfrac13\) & \(\tfrac13\)
& \(\tfrac13\) & \(\tfrac13\)
& \(\tfrac13\) & \(\tfrac13\)
& \(\tfrac13\) & \(\tfrac13\) \\
Half-weight
& \(\tfrac12\) & \(0\)
& \(\tfrac12\) & \(0\)
& \(\tfrac12\) & \(0\)
& \(\tfrac12\) & \(0\)
& \(\tfrac12\) & \(0\)
\end{tabular}
\end{ruledtabular}
\end{table}

\begin{figure}[ht]
\centering
\begin{tikzpicture}[scale=0.36]

\newdimen\ms
\ms=0.05cm

\tikzstyle{every path}=[line width=1pt]
\tikzstyle{c3}=[circle,inner sep={\ms/8},minimum size=6*\ms]
\tikzstyle{c2}=[circle,inner sep={\ms/8},minimum size=3*\ms]

\newdimen\R
\R=6cm

\path
  ({90 + 0 * 360 /5}:\R) coordinate(a1)
  ({90 + 36 + 0 * 360 /5}:{\R * sqrt((25+10*sqrt(5))/(50+10*sqrt(5)))}) coordinate(x1)
  ({90 + 1 * 360 /5}:\R) coordinate(a2)
  ({90 + 36 + 1 * 360 /5}:{\R * sqrt((25+10*sqrt(5))/(50+10*sqrt(5)))}) coordinate(x2)
  ({90 + 2 * 360 /5}:\R) coordinate(a3)
  ({90 + 36 + 2 * 360 /5}:{\R * sqrt((25+10*sqrt(5))/(50+10*sqrt(5)))}) coordinate(x3)
  ({90 + 3 * 360 /5}:\R) coordinate(a4)
  ({90 + 36 + 3 * 360 /5}:{\R * sqrt((25+10*sqrt(5))/(50+10*sqrt(5)))}) coordinate(x4)
  ({90 + 4 * 360 /5}:\R) coordinate(a5)
  ({90 + 36 + 4 * 360 /5}:{\R * sqrt((25+10*sqrt(5))/(50+10*sqrt(5)))}) coordinate(x5)
;

\draw[color=orange]  (a1) -- (x1) -- (a2);
\draw[color=red]     (a2) -- (x2) -- (a3);
\draw[color=green]   (a3) -- (x3) -- (a4);
\draw[color=blue]    (a4) -- (x4) -- (a5);
\draw[color=magenta] (a5) -- (x5) -- (a1);

\draw (a1) coordinate[c3,fill=orange,label=90:{\footnotesize \(a_1\)}];
\draw (a1) coordinate[c2,fill=magenta];

\draw (a2) coordinate[c3,fill=red,label={left:\footnotesize \(a_2\)}];
\draw (a2) coordinate[c2,fill=orange];

\draw (a3) coordinate[c3,fill=green,label={left:\footnotesize \(a_3\)}];
\draw (a3) coordinate[c2,fill=red];

\draw (a4) coordinate[c3,fill=blue,label={right:\footnotesize \(a_4\)}];
\draw (a4) coordinate[c2,fill=green];

\draw (a5) coordinate[c3,fill=magenta,label={right:\footnotesize \(a_5\)}];
\draw (a5) coordinate[c2,fill=blue];

\draw (x1) coordinate[c3,fill=orange,label={above left:\footnotesize \(x_1\)}];
\draw (x2) coordinate[c3,fill=red,label={left:\footnotesize \(x_2\)}];
\draw (x3) coordinate[c3,fill=green,label={below:\footnotesize \(x_3\)}];
\draw (x4) coordinate[c3,fill=blue,label={right:\footnotesize \(x_4\)}];
\draw (x5) coordinate[c3,fill=magenta,label={above right:\footnotesize \(x_5\)}];

\node[orange,font=\scriptsize]  at ($(x1)!0.28!(0,0)$) {\(C_1\)};
\node[red,font=\scriptsize]     at ($(x2)!0.28!(0,0)$) {\(C_2\)};
\node[green,font=\scriptsize]   at ($(x3)!0.28!(0,0)$) {\(C_3\)};
\node[blue,font=\scriptsize]    at ($(x4)!0.28!(0,0)$) {\(C_4\)};
\node[magenta,font=\scriptsize] at ($(x5)!0.28!(0,0)$) {\(C_5\)};

\end{tikzpicture}
\caption{Pentagon logic in the notation of
Sec.~\ref{sec:odd-cycles}. The cyclic atoms
\(a_1,\ldots,a_5\) are intertwining atoms, each shared by two adjacent
contexts. The atoms \(x_1,\ldots,x_5\) are context-specific. Each
colored broken line represents one context
\(C_i=\{a_i,a_{i+1},x_i\}\), with indices understood modulo~\(5\).}
\label{fig:pentagon-ax}
\end{figure}

It is worth noting, however, that for every odd \(n\ge3\) the
two-valued state
\[
p(x_i)=1,\qquad p(a_i)=0\quad (i=1,\ldots,n)
\]
is classical as a dispersion-free weight but ``exotic'' in a
different, coloring-theoretic sense: it cannot be refined to a
consistent three-coloring of the odd-cycle contexts. Indeed, fixing
all midpoint atoms \(x_i\) to one color would force the two remaining
colors to alternate on the cyclic atoms \(a_i\) around an odd cycle,
which is impossible~\cite{svozil-2025-color}. Thus the
coloring-theoretic obstruction already appears for the triangle
\(n=3\). The pentagon shown in Fig.~\ref{fig:pentagon-ax} and
Table~\ref{tab:pentagon-three-regimes} is the first odd-cycle case
where the classical, Lov{\'a}sz/KCBS, and admissible half-weight
bounds are all distinct.

\subsection{Triangle}

For \(n=3\), the construction gives Wright's triangle logic. The
half-weight has
\[
\sum_{i=1}^3p(a_i)=\frac32,
\]
whereas the classical bound is
\[
\alpha(C_3)=1.
\]
Thus the triangle already exhibits the basic phenomenon: an admissible
single-valued weight can lie outside the classical convex hull. The
graph-theoretic quantum comparison at \(n=3\), however, is degenerate.
The three-cycle coincides with the complete graph \(K_3\), and it is
not a KCBS-type scenario in the usual sense. We therefore use the
triangle only as the minimal example of an admissible but nonclassical
weight. The pentagon \(n=5\), shown in Fig.~\ref{fig:pentagon-ax}, is
the first nontrivial case in which the classical bound, the quantum
Lov{\'a}sz/KCBS bound, and the admissible half-weight are all distinct.

\subsection{Pentagon}

For \(n=5\), the construction gives the pentagon/pentagram logic of
Fig.~\ref{fig:pentagon-ax}. The ten atoms are
\[
a_1,\ldots,a_5,\qquad x_1,\ldots,x_5,
\]
with contexts
\[
C_i=\{a_i,a_{i+1},x_i\},
\]
where indices are understood modulo~\(5\). The three distinguished
assignments along the softmax path are listed explicitly in
Table~\ref{tab:pentagon-three-regimes}.

For the half-weight,
\[
p(a_i)=\frac12,\qquad p(x_i)=0,
\]
one obtains
\[
\sum_{i=1}^5p(a_i)=\frac52.
\]
This exceeds the classical bound
\[
\alpha(C_5)=2
\]
and also the usual quantum/KCBS bound
\[
\vartheta(C_5)=\sqrt5.
\]
Thus the pentagon half-weight is admissible, nonclassical, and beyond
the standard quantum five-cycle model.

Along the generalized-softmax path, the cyclic weight is
\[
\sum_{i=1}^5 p_t(a_i)
=
\frac{5}{2+r(t)},
\qquad
r(t)=\frac{g(t)}{g(u_0)}.
\]
It exceeds the classical bound \(2\) precisely when
\[
\frac{5}{2+r(t)}>2,
\]
that is,
\[
r(t)<\frac12.
\]
It exceeds the usual KCBS/Lov{\'a}sz value \(\sqrt5\) precisely when
\[
\frac{5}{2+r(t)}>\sqrt5,
\]
or equivalently
\[
r(t)<\sqrt5-2.
\]
Thus finite generalized-softmax coordinates already enter the
nonclassical and beyond-quantum regions once the midpoint atoms
\(x_i\) are sufficiently suppressed relative to the cyclic atoms
\(a_i\). The exact half-weight \(5/2\), however, is reached only in
the boundary limit \(r(t)\to0\). For the ordinary exponential link,
this condition is \(e^{\beta t}\to0\), i.e. \(t\to-\infty\) for
\(\beta>0\).

The softmax representation therefore does not explain the
beyond-quantum character of the weight. It only parametrizes, in the
closure of the positive generalized-softmax image, a point already
present in the admissible-weight polytope \(\mathcal W(L)\).

\section{Empirical and modelling implications}
\label{sec:empirical}

In cognitive and social-science applications, a score
\(u_C(a)\) may represent utility, salience, evidential strength,
diagnostic support, response tendency, or contextual attractiveness.
Equation~\eqref{eq:generalized-softmax} then turns such scores into
a probability distribution over the outcomes available in context
\(C\).

The preceding analysis shows that this step should not be confused
with a global probabilistic model. A context-wise softmax may fit
response frequencies in each context separately while failing to
assign a single probability to atoms shared by multiple contexts. In
such a case the model describes contextual response formation, not a
single-valued weight on a pasted logic.

For empirical applications, the analysis suggests a three-stage
procedure. First, estimate the response probabilities separately in
each context. Second, test whether shared atoms have equal
probabilities across the contexts in which they occur, within
sampling error. In a between-subjects design, these estimates come
from different samples, so a failure of equality may reflect genuine
disturbance, selection effects, or a mis-specified exclusivity
structure rather than a single identifiable mechanism. Third, if the
single-valuedness condition holds to the required statistical
precision, reconstruct the global weight \(p\in\mathcal W(L)\) and
test its location relative to the classical polytope
\(\mathcal C(L)\) and, where appropriate, a specified quantum set
\(\mathcal Q(L)\). Membership in \(\mathcal C(L)\) is a
linear-programming question; for small logics it can also be tested
directly by enumerating the two-valued states. Membership in common
graph-theoretic relaxations of \(\mathcal Q(L)\) is typically treated
by semidefinite programming or by known bounds such as
Eq.~\eqref{eq:lovasz-odd-cycle}.

Generalized softmax coordinates may be useful for parametrization and
estimation, but they do not replace these structural tests. A
successful softmax fit inside each context does not tell us whether
the observed probabilities are classical, quantum-like, exotic, or
not even globally compatible. The decisive issue is not local
normalization but the geometry of the global weight.

The representation theorem also shows where genuine modelling
assumptions enter. If the scores are completely unrestricted, then a
generalized softmax representation has little explanatory content:
any strictly positive admissible weight can be represented. A
substantive softmax model must therefore constrain the scores
independently of the target probabilities. Examples include atom-only
scores \(u_C(a)=u(a)\), low-dimensional parametric forms
\[
u_C(a)=\bm\theta\cdot \bm\phi(a,C),
\]
regularity conditions across contexts, or mechanistic assumptions
about how utilities, saliences, or evidential strengths are formed.
Only such restrictions turn generalized softmax from a
reparametrization into an explanatory model.

This hierarchy is particularly relevant to quantum cognition
~\cite{Busemeyer2012,Pothos2013}. Context-wise response models are
often descriptively flexible, but flexibility is not the same as a
structural explanation. The relevant structural questions are whether
the fitted probabilities glue, whether the glued weight is classical,
whether it is Born-rule realizable, whether a Cauchy/Gleason-type
linearity theorem applies on a richer domain, and whether the weight
is exotic relative to the chosen event structure.

\section{Discussion}
\label{sec:discussion}

The main result can be summarized as follows:
\[
\begin{aligned}
&\text{single-valued generalized softmax} \\
&\qquad =
\text{admissible weight in link-function coordinates}.
\end{aligned}
\]
This equivalence is mathematically elementary, but it prevents a
common conflation. Softmax normalization is sometimes treated as a
probabilistic principle. In a non-Boolean event structure, however,
it is only a local normalization scheme unless cross-context gluing
conditions are added. Once those conditions are added, softmax no
longer defines a new model class; it simply parametrizes the
admissible weights that were already present.

This observation also clarifies the role of link functions. The
exponential is not essential for the gluing theorem. Any positive
link with a suitable range can be used to coordinatize the positive
part of the weight polytope. Changing the link changes the score
scale, response sensitivity, and statistical convenience, but not the
underlying distinction between classical, quantum, and exotic weights.
The exponential link becomes distinguished only under an additional
Cauchy-type composition principle: additive score contributions must
correspond to multiplicative unnormalized weights.

The exotic examples are especially instructive. The odd-cycle
half-weights are single-valued in the operational sense: each atom
has one value and each context is normalized. They are nevertheless
not classical mixtures of two-valued states. For odd cycles of length
at least five, the same assignments exceed the usual Lov{\'a}sz/KCBS
quantum bounds. A limiting softmax can represent these weights, but
it does not explain them. Their existence is a property of the
pasted event structure.

One should therefore be cautious with phrases such as ``softmax
beyond quantum.'' The more precise statement is: generalized softmax
coordinates can represent admissible weights that are beyond the
classical or quantum sets. Boundary exotic weights with zero
components are represented only as limits or, equivalently, as points
in the closure of the positive generalized-softmax image. The
beyondness is not due to the normalizing function; it is due to the
selected point in \(\mathcal W(L)\).

This distinction echoes the classic Bell--Peres intuition regarding
contextuality. If one insists on assigning outcomes to unperformed
measurements, Bell-type violations force the value assigned to an
observable to depend on the context in which it is considered. Peres
expressed this by replacing a putative setting-independent result
\(r_{j\alpha}\) by an explicitly context-dependent result
\(r_{j\alpha}(\beta)\), or more generally
\(r_{j\alpha}(\beta,\gamma,\ldots)\), where the additional arguments
refer to the remaining measurement context~\cite[p.~747]{peres222}.
Similarly, in an explicitly contextual
Clauser--Horne--Shimony--Holt (CHSH) notation one may write
\(E(x,y)\) as \(E(x_y,y_x)\), with contextuality manifesting itself
through \(x_y\ne x_{y'}\) when \(y\ne y'\)
~\cite{svozil-2011-enough}.

In the present framework, a free context-wise softmax without the
gluing condition is a probabilistic analogue of this explicit
context-dependence. A single nominal atom \(a\) is replaced by
context-indexed copies \(a_C,a_{C'},\ldots\), and the probabilities
assigned to these copies need not agree. In contrast, an admissible
weight on a pasted event structure imposes the strict identification
of shared atoms: if \(a\in C\cap C'\), then \(a\) carries one
probability \(p(a)\), not two.

This distinction changes the mathematical problem. In a strictly
pasted event structure, shared atoms are the points at which Boolean
blocks are glued together. They are responsible for the global
constraints that make Bell-, KCBS-, and Kochen--Specker-type
obstructions possible. If the shared atoms are split into
context-indexed copies, the original pasted structure is replaced by
a disjoint family of context-wise Boolean algebras. On that disjoint
structure alone there is no logical obstruction to forming a global
joint distribution over the context-indexed variables; one may, for
example, take a product of the context-wise distributions. What has
been removed is not contextuality as an empirical phenomenon, but the
strict logical identification of atoms across contexts.

The Contextuality-by-Default (CbD) framework~\cite{Dzhafarov-19PhysRevA,Dzhafarov2026} starts precisely from this
context-indexed description. Random variables measuring the same
content in different contexts are not identified at the outset; they
are treated as distinct variables belonging to different bunches.
Empirical disturbance, or inconsistent connectedness, is therefore
built into the formalism. CbD then reintroduces a probabilistic
analogue of gluing by asking whether these context-indexed variables
can be coupled so that same-content variables agree as often as their
individual distributions allow. A system is contextual, in the CbD
sense, if no global coupling can satisfy all these maximal-connection
requirements simultaneously. This residual system-wide obstruction is
what CbD calls covert context-dependence~\cite{Dzhafarov2026}.

Thus the two approaches address different regimes. CbD provides a
probabilistic methodology for systems in which context-indexing and
empirical disturbance are taken as primitive. The present paper
instead isolates the strictly glued regime of pasted event
structures. In that regime, shared atoms are identified from the
start, and an admissible weight is a single-valued function on those
atoms. The main result then says that, once this strict
single-valuedness is imposed, generalized softmax normalization
contributes no new probabilistic mechanics: it is only a coordinate
parametrization of the admissible-weight polytope \(\mathcal W(L)\).
Whether the resulting weight is classical, quantum-realizable,
exotic, or beyond a specified quantum set is determined by the
geometry of the pasted event structure, not by the softmax link.

A natural next step is not to ask whether generalized
softmax can represent exotic weights---it can, at least in the
closure---but which constrained score families can do so. Atom-only
scores \(u_C(a)=u(a)\), low-dimensional forms
\(u_C(a)=\bm\theta\cdot\bm\phi(a,C)\), locality restrictions on
\(\bm\phi\), or smoothness assumptions across neighboring contexts
may cut out proper subregions of \(\mathcal W(L)\). Characterizing
these subregions, and comparing them with \(\mathcal C(L)\),
\(\mathcal Q(L)\), and the exotic part of \(\mathcal W(L)\), is the
substantive modelling problem left open by the present deflationary
result.

Several extensions suggest themselves. First, one may study sparse or
thresholded response maps that produce zero weights at finite scores.
Second, one may impose additional locality, dimensionality, or
regularity constraints on the score function \(u\), thereby obtaining
proper subfamilies of \(\mathcal W(L)\). Third, one may investigate
statistical estimation procedures in which generalized softmax
coordinates are used for constrained inference over \(\mathcal C(L)\),
\(\mathcal Q(L)\), or \(\mathcal W(L)\). Fourth, one may ask when
Cauchy/Gleason-type extension theorems apply to weights obtained from
empirical response models, thereby connecting finite pasted logics
with richer effect-algebraic or Hilbert-space structures. These
directions would turn the present geometric observation into a
practical modeling framework.

\section{Conclusion}
\label{sec:conclusion}

Generalized softmax rules transform scores into normalized
probabilities within a context. In non-Boolean event structures with
overlapping contexts, this is not enough to define a probability
weight. Shared atoms must receive identical probabilities wherever
they occur. If this no-disturbance or single-valuedness condition is
not imposed, one has a family of context-wise response distributions,
not a weight on the pasted logic. If it is imposed, the generalized
softmax rule collapses to a coordinate representation of an
admissible weight.

Under mild assumptions on the positive link function, every strictly
positive admissible weight can be represented in this way. Boundary
weights, including the exotic half-weights on odd cycles, are
obtained as limits when zero lies in the closure of the link's range.
Therefore single-valued generalized softmax is not a new probability
theory. It is a parametrization of \(\mathcal W(L)\). The interesting
distinctions are instead geometric: whether the resulting weight is
classical, quantum-realizable, exotic, or beyond the specified
quantum set.

This result is adjacent to, but distinct from, Cauchy/Gleason
linearity theorems. Cauchy-type additivity arguments can force
linear/Born-rule probability assignments on sufficiently rich
additive domains. The present theorem instead concerns the gluing of
locally normalized response probabilities into a single admissible
weight. Cauchy's equation reappears at the level of the link function
only if one imposes the additional axiom that additive scores give
multiplicative unnormalized weights, in which case the ordinary
exponential softmax is singled out.

Odd cyclic logics provide minimal examples. They support half-weights
that are normalized and single-valued but not classical; for odd
cycles of length at least five, these weights exceed the usual
Lov{\'a}sz/KCBS quantum bound. They are obtained as limiting softmax
assignments, but their exoticity belongs to the event structure and
the admissible-weight polytope, not to the exponential function.

The resulting lesson is simple: local normalization,
single-valuedness/no-disturbance, Cauchy/Gleason linearity, and
physical or cognitive realizability are distinct. Softmax addresses
the first. The pasted logic imposes the second. Cauchy-type arguments
may address the third on richer domains. Classical, quantum, and
exotic probability theories concern the fourth.

\begin{acknowledgments}
This research was funded in whole or in part by the Austrian Science
Fund (FWF), Grant DOI:10.55776/PIN5424624. The author acknowledges
TU Wien Bibliothek for financial support through its Open Access
Funding Programme.
\end{acknowledgments}

\bibliography{svozil}

\end{document}